\newtheorem{theorem}{Theorem}
\newtheorem{proposition}{Proposition}
\newtheorem{example}{Example}
\newtheorem{definition}{Definition}
\newcommand {\dfn} {\stackrel{\Delta} {=}}
\newcommand {\reals} {{\rm I\!R}}
\newcommand {\bone} {\mbox{\boldmath $1$}}
\newcommand{\hK}{{\hat{K}}}
\newcommand{\hP}{{\hat{P}}}
\newcommand{\hH}{{\hat{H}}}
\newcommand{\hX}{{\hat{X}}}
\newcommand{\hx}{{\hat{x}}}
\newcommand {\be} {\mbox{\boldmath $e$}}
\newcommand {\bw} {\mbox{\boldmath $w$}}
\newcommand {\bx} {\mbox{\boldmath $x$}}
\newcommand {\hbx} {\hat{\bx}}
\newcommand {\bsigma} {\mbox{\boldmath $\sigma$}}
\newcommand {\by} {\mbox{\boldmath $y$}}
\newcommand {\bz} {\mbox{\boldmath $z$}}
\newcommand {\bE} {\mbox{\boldmath $E$}}
\newcommand{\calA}{{\cal A}}
\newcommand{\calB}{{\cal B}}
\newcommand{\calI}{{\cal I}}
\newcommand{\calS}{{\cal S}}
\newcommand{\calW}{{\cal W}}
\newcommand{\calX}{{\cal X}}
\newcommand{\calY}{{\cal Y}}
\newcommand{\calZ}{{\cal Z}}
\begin{document}
\thispagestyle{empty}
\title{Generalized Forms of the Kraft Inequality for Finite-State Encoders}

\author{Neri Merhav}

\date{}
\maketitle

\begin{center}
The Andrew \& Erna Viterbi Faculty of Electrical Engineering\\
Technion - Israel Institute of Technology \\
Technion City, Haifa 3200003, ISRAEL \\
E--mail: {\tt merhav@ee.technion.ac.il}\\
\end{center}
\vspace{1.5\baselineskip}
\setlength{\baselineskip}{1.5\baselineskip}

\begin{abstract}
We derive a few extended versions of the Kraft inequality for
information lossless finite-state encoders. The main basic contribution is in
defining a notion of a Kraft matrix and in establishing the fact that a
necessary condition for information losslessness of a finite-state encoder is
that none of the eigenvalues of this matrix have modulus larger than unity, or
equivalently, the generalized Kraft inequality asserts that the 
spectral radius of the Kraft matrix cannot exceed one. For the important
special case where the FS encoder is irreducible,
we derive several equivalent forms of this inequality, which are based on
well known formulas for spectral radius. It also turns out that in the
irreducible case, Kraft sums are bounded by a constant, independent of the
block length, and thus cannot grow even in any subexponential rate.
Finally, two extensions are outlined - one concerns the case of side
information available to both encoder and decoder, and the other is for lossy
compression.
\end{abstract}

\section{Introduction}

Kraft's inequality plays a pivotal role in information theory.
It provides a complete and elegant characterization of the feasibility of
variable--length uniquely decodable (UD) codes
by imposing a simple constraint on codeword lengths.
In 1949, Kraft \cite{Kraft49} introduced this inequality for prefix codes,
establishing a condition on codeword lengths necessary for prefix decodability.
Seven years later, McMillan \cite{McMillan56} generalized this to UD codes,
leading to the Kraft-McMillan inequality, which is widely used in information
theory, first and foremost, to furnish a necessary and sufficient condition for
the existence of a UD code with a given code-length function,
and thereby also to prove the converse to the lossless source
coding theorem, asserting that no UD source code can yield a coding rate below
the entropy rate of the source.
Once this necessary and sufficient condition is satisfied,
there exists, not only a general UD code, but also more
specifically, a prefix code with that length function.
Beyond its immediate operational meaning, Kraft's inequality
underlies many fundamental principles in lossless compression, such as the
equivalence between lossless source coding and probability assignment.
In general, its importance stems from the fact that it connects combinatorial properties of codes
with analytical bounds in a precise and tractable manner.
Classical treatments can be found in standard texts such as \cite{Gallager68}
and \cite{CT06}.

When memory is introduced into the encoder, however,
the classical Kraft inequality (CKI) no longer applies directly.
Finite--state encoders constitute a natural and widely studied model for compression with memory, arising
in universal source coding, individual--sequence coding, and finite--state prediction.
In this setting, the encoder's output depends, not only on the current source
symbol, but also on an internal state
that evolves over time in a manner that depends on past inputs. As a result, the set of admissible
codeword length assignments is no longer characterized by a single scalar inequality,
and the extension of Kraft's condition becomes substantially more subtle.

Early progress in this direction was made by Ziv and Lempel \cite{ZL78}, who
derived a generalized Kraft inequality (GKI)
for information--lossless (IL) finite--state (FS) encoders by considering blocks over
large super-alphabets, see Lemma 2 in \cite{ZL78}. When reading Ziv and Lempel's article, the reader is
under the impression is that their GKI was
established merely an
auxiliary result needed on their way of proving that the finite-state compressibility
of a sequence is lower bounded by its asymptotic empirical entropy.
Their focus was not on the Kraft inequality on its own right. Consequently, their formulation
of Kraft's inequality  suffers from two main limitations: (i)
it does not reduce exactly to the CKI when the encoder
has merely one state, and (ii)
it is based on super-alphabet extensions to long blocks rather than being
formulated in a single-letter manner, or at the level in which the encoder is
defined in the first place.
More precisely, while the inequality remains valid even for short block
lengths, it yields tight results only asymptotically for long blocks.

Subsequent work has explored various aspects of finite--state coding,
including irreducibility, asymptotic equipartition properties, and connections to entropy and prediction.
Nevertheless, a direct, state--level generalization of Kraft's inequality that mirrors
the simplicity and sharpness of the classical result has remained elusive.

In this paper, we present several new forms of GKI's
for IL FS encoders. Our approach associates with every given IL FS encoder a nonnegative matrix,
termed the {\em Kraft matrix}, whose entries are determined by the encoder's single--symbol output lengths
and state transitions. We show that information losslessness imposes a spectral--radius constraint
on this matrix, which serves as a natural analogue of Kraft's inequality.
Unlike Ziv and Lempel's GKI mentioned above,
this inequality, as well as its several equivalent forms presented herein,
reduces exactly to the CKI in the single--state case
and avoids the use of super-alphabet extensions.

We then further refine the analysis for irreducible FS encoders, where Perron--Frobenius theory yields stronger,
uniform bounds on matrix powers. These results lead to transparent lower bounds on achievable compression rates
for both stochastic sources and individual sequences. In addition, we extend the framework to settings with
side information (SI) available at both the encoder and decoder, where the relevant constraint is expressed in
terms of the joint spectral radius (JSR) of a finite set of Kraft matrices
\cite{RS60}. This extension clarifies the structural limitations
imposed by SI and highlights the role of common sub-invariant vectors.
Finally, another extension is associated with lossy source coding in the
spirit of those of \cite{Campbell73}, \cite{me95}, and \cite{me26}.

Overall, the proposed framework provides a unified and exact characterization of
feasibility conditions for FS encoders, sharpening existing results and offering
new tools for the analysis of compression and prediction under finite--memory constraints.

The outline of the remaining part of this article is as follows.
In Section \ref{nsb}, we establish notation conventions, define the setting,
and provide some background on the GKI of Ziv and
Lempel. In Section \ref{bgki}, we present our basic GKI asserting that the
spectral radius of the Kraft matrix must not exceed unity for an IL FS
encoder. Stronger and more explicit statements are then provided for
irreducible encoders in Section \ref{irred}. In Section \ref{converses}, we
apply the GKI of Section \ref{irred} to obtain converse bounds on compression
and prediction of irreducible machines, both in the probabilistic setting and
for individual sequences. Finally, in Section \ref{si}, we extend the GKI to
the case of availability of SI, and in Section \ref{lossy}, we extend it to
the lossy case.

\section{Notation, Setting and Background}
\label{nsb}

Throughout this paper, scalar random
variables (RV's) will be denoted by capital
letters, their sample values will be denoted by
the respective lower case letters, and their alphabets will be denoted
by the respective calligraphic letters.
A similar convention will apply to
random vectors and their sample values,
which will be denoted with same symbols superscripted by the dimension.
Thus, for example, $X^n$ ($n$ -- positive integer)
will denote a random $n$-vector $(X_1,\ldots,X_n)$,
and $x^n=(x_1,\ldots,x_n)$ is a specific vector value in $\calX^n$,
the $n$--th Cartesian power of $\calX$, which is the alphabet of each
component of $x^n$. For two positive integers, $i$ and $j$, where $i\le j$,
$x_i^j$ and $X_i^j$
will designate segments $(x_i,\ldots,x_j)$
and $(X_i,\ldots,X_j)$, respectively,
where for $i=1$, the subscript will be omitted (as above).
For $i > j$, $x_i^j$ (or $X_i^j$) will be understood as the null string.
An infinite sequence $(x_1,x_2,\ldots)$ will be denoted by $\bx$.
Logarithms and exponents, throughout this paper, will be understood to be taken to the base 2
unless specified otherwise. The indicator of an event $\calA$ will be denoted
by $\calI\{\calA\}$, i.e., $\calI\{\calA\}=1$ if $\calA$ occurs and
$\calI\{\calA\}=0$ if not.

Following the the finite-state encoding model of \cite{ZL78}, 
an FS encoder is
defined by the quintuple,
$E=(\calX,\calY,\calZ,f,g)$,
whose five ingredients are defined as follows:
$\calX$ is the finite alphabet of each symbol of the source sequence to
be compressed. The cardinality of $\calX$ will be denoted by $\alpha$.
$\calY$ is a finite collection of binary variable-length strings, which is
allowed to consist of empty string, denoted `null' (whose length is zero);
$\calZ$ is a finite set of $s$ states of the encoder;
$f: \calZ \times \calX \to \calY$ is the output function, and
$g: \calZ \times \calX \to \calZ$ is the next-state function.

Given an infinite source sequence to be compressed,
$\bx = (x_1, x_2, \ldots)$, with $x_i \in
\calX$, the FS encoder $E$ produces an
infinite output sequence, $\by = (y_1, y_2, \ldots)$ with $y_i \in \calY$,
forming the compressed bit-stream, while passing through a sequence of
states $\bz = (z_1, z_2, \ldots)$ with $z_i \in \calZ$, $i=1,2,\ldots$.
The encoder is governed by the recursive equations:
\begin{eqnarray}
y_i &=& f(z_i, x_i), \label{yi} \\
z_{i+1}&=&g(z_i,x_i), \label{nextstate}
\end{eqnarray}
for $i = 1, 2, \ldots$, with a fixed initial state $z_1\in\calZ$.
If at any step $y_i = \mbox{null}$, this is referred to as idling as no output is
generated, but only the state evolves in response to the input. At each time
instant $i$, the encoder emits $L(y_i)=L[f(z_i,x_i)]$ bits, and it is
understood that $L(\mbox{null})=0$.

An encoder with $s$ states, henceforth called an $s$-state
encoder, is one for which $|\calZ| = s$. 
For the sake of simplicity, we adopt a few notation conventions from \cite{ZL78}:
Given a segment of input symbols $x_i^j$, where $i$ and $j$ are positive
integers with $i \le j$, and an initial state
$z_i$, we use $f(z_i, x_i^j)$ to denote the corresponding output
segment $y_i^j$ produced by $E$.
Similarly, $g(z_i, x_i^j)$ will denote the final state $z_{j+1}$ after processing the
inputs $x_i^j$, beginning from state $z_i$. Thus, in response to an input
$x^n$, the encoder produces a compressed bit string of length
$L(y^n)=L[f(z_1,x^n)]=\sum_{i=1}^nL[f(z_i,x_i)]=\sum_{i=1}^nL(y_i)$ bits.
An FS encoder $E$ is called {\em information lossless} (IL) if,
given any initial state $z_i \in \calZ$, any positive integer $n$, and any
input string, $x_i^{i+n}$, the triplet
$(z_i,f(z_i,x_i^{i+n}),g(z_i,x_i^{i+n}))$
uniquely determines the corresponding input string $x_i^{i+n}$.

For example, a fixed-to-variable block encoder of length $k$ can be viewed as
an IL FS encoder with
$s=\sum_{j=0}^{k-1}\alpha^j=\frac{\alpha^k-1}{\alpha-1}$ states. The initial state
designates the beginning of each block. At each time instant, the state of the
encoder is simply the contents
of the part of the current input block received so far. In general, as long as the input block has not been
completed, the encoder idles 
and upon receiving the last input symbol of the block, the encoder produces the
compressed codeword for that block
and it returns to its initial state, ready to receive the next input block. In some cases, if there is enough
structure, the encoder does not necessarily have to
idle until the very end of the block. For instance, if at a certain time before
the end of the block, the contents of the beginning of the block read so far
already determines the beginning of the compressed representation, the encoder
can start to output these compressed bits before the end of the block.
As an instance of such a block code of length $k=2$, see Example \ref{xmp1} below.

In Lemma 2 of \cite{ZL78}, 
Ziv and Lempel presented a GKI for IL FS encoders. It
asserts that
for every IL encoder with $s$ states and every positive integer $\ell$,
\begin{equation}
\label{ZL78Kraft}
\sum_{x^\ell\in\calX^\ell} 2^{-\min_{z\in{\cal Z}}L[f(z,x^\ell)]}\le
s^2\left[1+\log\left(1+\frac{\alpha^\ell}{s^2}\right)\right].
\end{equation}
Ziv and Lempel's GKI was a perfect tool for their purpose of
proving that the compression ratio achieved by an IL FS encoder cannot be smaller
than the asymptotic empirical entropy rate (defined in \cite{ZL78}) for any infinite source sequence
$\bx$. However, when examined for finitely long sequences, and from the
perspective of serving as a necessary condition for information losslessness,
this inequality suffers from two main weaknesses.
\begin{enumerate}
\item It does not
exactly recover the CKI for the special case, $s=1$, as
in that case, the
r.h.s.\ becomes $1+\log(1+\alpha^\ell)>1$. Moreover, even if $\ell=1$, the
right-hand side (r.h.s.),
which is $1+\log_2(1+\alpha)$, is even larger than $2$ for every $\alpha\ge 2$.
On a related note, a close inspection of the proof of Lemma 2 in \cite{ZL78} reveals that the
inequality in eq.\ (\ref{ZL78Kraft}) is 
actually a strong inequality ($<$), in other words, this inequality is always loose.
\item It is significant only upon
an extension from single symbols into the super-alphabet of $\ell$-strings for large $\ell$, unlike the ordinary
Kraft inequality, which is asserted in the same level that code is defined.
For example, the CKI for a code that is defined in the level of
single symbols of $\calX$ is asserted in that level, i.e.,
$\sum_{x\in{\cal X}}2^{-L[f(x)]}\le 1$.
\end{enumerate}
Our objective in this work is first and foremost, to establish another GKI
for IL FS encoders that is free of the above mentioned drawbacks.
In other words, for the case $s=1$, it would recover the traditional Kraft
inequality exactly, and it will be posed in the single-letter level without
recourse to alphabet extensions. The latter property will enable one to verify 
relatively easily that this inequality holds in a given situation.

Our first proposed GKI serves as the basis for our subsequent derivations.
Having derived it, we then confine attention to the subclass of
irreducible IL FS encoders, namely, FS encoders for which every state can be
reached from every state in a finite number of steps. For this important subclass of
encoders, we provide several alternative formulations of the GKI and
provide a stronger upper bound to the growth rate of the Kraft sum as function
of the block length. Again, all these forms are smooth extensions of the
CKI in the sense that in the special case $s=1$, they
degenerate to the CKI. Finally, we consider extensions in two
directions (one at a time): the first is the case where SI is
available to both encoder and decoder, and the second is the case of lossy
compression.

\section{The Basic Generalized Kraft Inequality}
\label{bgki}

For a given IL FS encoder $E$ with $s$ states,
let us define an $s\times s$ {\em Kraft matrix} $K$, whose $(z,z')$ entry is given by
\begin{equation}
\label{Kdef}
K_{zz'}=\sum_{\{x:~g(z,x)=z'\}}2^{-L[f(z,x)]},~~~~~(z,z')\in\calZ^2,
\end{equation}
where the summation over an empty set is understood as zero.
Since $K$ is a non-negative matrix, then according to Theorem 8.3.1 in
\cite{HJ85}, the spectral radius of $K$, $\rho(K)$, is
an eigenvalue of $K$.
Our first form of a GKI is the following.\\

\begin{theorem}
\label{thm1}
For every IL FS encoder,
\begin{equation}
\label{basicgki}
\rho(K)\le 1.
\end{equation}
\end{theorem}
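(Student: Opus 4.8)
The plan is to relate powers of the Kraft matrix $K$ to Kraft-type sums over input blocks, and then to bound those sums using the information-losslessness (IL) property. The key combinatorial identity I would establish first is that for every positive integer $n$ and every pair of states $(z,z')$,
\begin{equation}
\label{Knidentity}
(K^n)_{zz'}=\sum_{\{x^n:~g(z,x^n)=z'\}}2^{-L[f(z,x^n)]}.
\end{equation}
This follows by induction on $n$: the base case $n=1$ is the definition (\ref{Kdef}), and the inductive step expands $(K^{n+1})_{zz'}=\sum_{z''}(K^n)_{zz''}K_{z''z'}$, using additivity of the length function along a path, $L[f(z,x^{n+1})]=L[f(z,x^n)]+L[f(z'',x_{n+1})]$ where $z''=g(z,x^n)$, together with the fact that concatenating the input string $x^n$ (driving $z\to z''$) with a symbol $x_{n+1}$ (driving $z''\to z'$) is exactly the factorization of input blocks driving $z\to z'$.

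Next I would fix an arbitrary starting state $z$ and sum (\ref{Knidentity}) over $z'$, obtaining
\begin{equation}
\label{rowsum}
\sum_{z'\in\calZ}(K^n)_{zz'}=\sum_{x^n\in\calX^n}2^{-L[f(z,x^n)]}.
\end{equation}
Now the IL property enters: the map $x^n\mapsto\big(f(z,x^n),g(z,x^n)\big)$ is injective for fixed $z$, so distinct input blocks $x^n$ starting from $z$ produce distinct (output string, final state) pairs. Grouping the right-hand side of (\ref{rowsum}) by the final state $z'=g(z,x^n)$, each group is a sum of $2^{-L(\cdot)}$ over a \emph{prefix-free} set of binary strings — indeed, the output strings in a fixed group are the labels $f(z,x^n)$ of distinct input blocks with a common endpoint $z'$, and IL forces these to be distinct; more carefully, I need that within a group no output string is a prefix of another, which I would argue from IL applied to the common initial state $z$ (if $f(z,u)$ were a prefix of $f(z,v)$ with $u,v$ both ending at $z'$, one can derive a violation of unique decodability along the trellis). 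Hence each group obeys the classical Kraft inequality with sum $\le 1$, and since there are $s$ possible final states, $\sum_{z'}(K^n)_{zz'}\le s$. Therefore every row sum of $K^n$ is bounded by $s$, uniformly in $n$.

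Finally I would convert this uniform bound on $\|K^n\|_\infty$ into the spectral-radius statement via Gelfand's formula: $\rho(K)=\lim_{n\to\infty}\|K^n\|^{1/n}$ for any matrix norm, and $\|K^n\|_\infty=\max_z\sum_{z'}(K^n)_{zz'}\le s$ gives $\rho(K)\le\lim_{n\to\infty}s^{1/n}=1$. Equivalently, since $K\ge 0$, Theorem 8.3.1 in \cite{HJ85} guarantees $\rho(K)$ is an eigenvalue with a nonnegative eigenvector $v$; if $\rho(K)>1$ then $(K^n v)_z=\rho(K)^n v_z$ would grow unboundedly in some coordinate, contradicting $(K^n v)_z\le \|K^n\|_\infty\,\|v\|_\infty\le s\|v\|_\infty$. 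The main obstacle I anticipate is the prefix-free claim within each final-state group: establishing that IL (which is stated as unique recoverability of the input from the triple $(z_i,\text{output},\text{final state})$) implies the output strings sharing a common source and sink state form a prefix code requires a careful trellis argument, since IL a priori only gives injectivity of the joint (output, final-state) map, not the stronger prefix property on the output alone. If that prefix argument proves delicate, an alternative is to avoid it entirely: bound $\sum_{z'}(K^n)_{zz'}$ instead by noting that the injectivity of $x^n\mapsto(f(z,x^n),g(z,x^n))$ together with a counting/McMillan-style argument (raising to a further power and counting distinct outputs by length) already yields a subexponential, hence ultimately harmless, bound on the row sums, which is all Gelfand's formula needs.
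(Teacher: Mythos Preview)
Your identification of the prefix-free step as the weak point is exactly right, and the fallback you sketch is essentially the paper's proof. The IL hypothesis gives only that $x^n\mapsto f(z,x^n)$ is injective on $\{x^n:g(z,x^n)=z'\}$; it does \emph{not} give prefix-freeness of the image, and the trellis argument you gesture at does not close this gap. If $f(z,u)$ is a proper prefix of $f(z,v)$ with $u,v\in\calX^n$ both driving $z\to z'$, there is no evident way to manufacture two \emph{equal-length} inputs with identical (output, final-state) pairs, which is what an IL violation would require. So the assertion that ``each group obeys the classical Kraft inequality with sum $\le 1$'' is unsupported, and the resulting uniform bound $\sum_{z'}(K^n)_{zz'}\le s$ is not established.

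The paper bypasses this by exactly the counting route you propose as an alternative. For fixed $(z,z')$ one groups by output length $l$: injectivity alone gives $|\{x^\ell:L[f(z,x^\ell)]=l,\ g(z,x^\ell)=z'\}|\le 2^l$, whence
\[
(K^\ell)_{zz'}\le\sum_{l=0}^{\ell L_{\max}}2^{-l}\cdot 2^l=1+\ell L_{\max},
\]
so row sums grow at most linearly in $\ell$. The paper then finishes by contradiction via Perron--Frobenius (if $\rho(K)>1$, the nonnegative eigenvector forces exponential growth of $\be_z^\top K^\ell\bone$ for some $z$); your Gelfand-formula finish is an equally valid and arguably cleaner way to conclude. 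In short: drop the prefix-free attempt, keep identity~(\ref{Knidentity}), insert the length-counting bound, and you have the paper's proof.
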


As can be seen, this GKI has the two desired properties we mentioned above:
\begin{enumerate}
\item The case $s=1$
obviously recovers the CKI, since in this case, $K$
degenerates to a scalar, which is nothing but the Kraft sum,
$\sum_{x\in{\cal X}}2^{-L[f(x)]}$, and then eq.\ (\ref{basicgki}) asserts that
$\sum_{x\in{\cal X}}2^{-L[f(x)]}\le 1$, as desired.
\item The matrix $K$ is defined in terms of the functions $f$ and $g$ only.
These functions are defined in
the level of the single symbols and states. 
\end{enumerate}

The first property sets the stage of establishing the condition $\rho(K)\le 1$
as a {\em necessary condition for information losslessness} of a given FS
encoder, in analogy to the fact that ordinary Kraft inequality is a necessary (and
sufficient) condition for the existence of unique decodability in the case
$s=1$. Since there is no involvement of summations over super-alphabets of long vectors,
this condition is relatively easy to check, similarly as the CKI, which is
a necessary condition for the unique decodability (UD) property of ordinary lossless source
codes.

\begin{proof}
Let $L_{\max}\dfn\max_{z,x}L[f(z,x)]$.
For every positive integer $\ell$, the $(z,z')$ entry of the $\ell$-th order
power, $K^\ell$, is given by
\begin{eqnarray}
\label{lineargrowth}
[K^\ell]_{zz'}&=&\sum_{z_2\in{\cal Z}}\sum_{z_3\in{\cal
Z}}\cdot\cdot\cdot\sum_{z_\ell\in{\cal Z}}\prod_{i=1}^{\ell}
\left(\sum_{\{x_i:~g(z_i,x_i)=z_{i+1}\}}2^{-L[f(z_i,x_i)]}\right)\nonumber\\
&=&\sum_{\{x^\ell:~g(z,x^\ell)=z'\}}2^{-L[f(z,x^\ell)]}\nonumber\\
&=&\sum_{l=0}^{\ell\cdot L_{\max}}
2^{-l}\cdot|\{x^\ell:~L[f(z,x^\ell)]=l,~g(z,x^\ell)=z'\}|\nonumber\\
&\le&\sum_{l=0}^{\ell\cdot L_{\max}}
2^{-l}\cdot 2^l\nonumber\\
&=&1+\ell\cdot L_{\max},
\end{eqnarray}
where in the first line, $z_1=z$ and $z_{\ell+1}=z'$, and the inequality is
due to the postulated IL property (as $z$ and $z'$ are fixed).
Alternatively, we can also bound $[K^\ell]_{zz'}$ by $1+\log(1+\alpha^\ell)$
using the same considerations as in the proof of Lemma 2 in \cite{ZL78}, except that the factor $s^2$ is
missing since $z$ and $z'$ are fixed. Which bound is tighter depends on
$L_{\max}$. In any case, both
expressions are essentially linear in $\ell$.
Continuing with the first bound, it follows that
\begin{equation} 
\label{eq7}
\sum_{z'\in\calS}[K^\ell]_{zz'}=\sum_{x^\ell\in\calX^\ell}
2^{-L[f(z,x^\ell)]}\le s(1+\ell\cdot L_{\max}).
\end{equation}
Let $\be_z$ be a column vector of dimension $s$ whose entries are all zero except
the entry corresponding to state $z$, which is 1, and let $\bone$ denote the all-one column
vector of dimension $s$. Then, eq.\ (\ref{eq7}) can be rewritten as
\begin{equation}
\label{lineargrowth1}
\be_z^{\top}K^\ell \bone\le s(1+\ell\cdot L_{\max}).
\end{equation}
To prove that $\rho(K)\le 1$,
assume conversely, that $\lambda\dfn\rho(K)>1$.
Since $K$ has non-negative entries, the Perron--Frobenius theorem
(see again Theorem 8.3.1 in \cite{HJ85}) guarantees
that the right eigenvector $v$ corresponding to $\lambda$ has non-negative
components and at least one strictly positive component.
Since $\bone=(1,\dots,1)^T$ has strictly positive components,
there exists a constant $\delta>0$ such that
$\bone\ge\delta v$ component-wise.
Multiplying by $K^\ell$ from the left and using the non-negativity of $K$,
we obtain
\begin{equation}
K^\ell \bone \ge \delta K^\ell v
= \delta \lambda^\ell v .
\end{equation}
Taking the $z$-th component yields
\begin{equation}
\be_z^{\top} K^\ell \bone \ge \delta \lambda^\ell v_z .
\end{equation}
For any index $z$ with $v_z>0$, the r.h.s.\ grows exponentially
in $\ell$ since $\lambda>1$, but
this contradicts eq.\ (\ref{lineargrowth1})
which establishes an upper bound that grows only linearly in $\ell$.
Therefore the postulate $\rho(K)>1$ cannot hold true, and we conclude that
$\rho(K)\le 1$, which completes the proof.
\end{proof}

Since $\rho(K)\le 1$, it is clear that for every natural $\ell$,
$\rho(K^\ell)=[\rho(K)]^\ell\le\rho(K)\le 1$. In other words,
the spectral radius of
\begin{equation}
K^\ell=\left\{\sum_{\{x^\ell:~g(z,x^\ell)=z'\}}2^{-L[f(z,x^\ell)]}\right\}_{z,z'\in{\cal
Z}} 
\end{equation}
is also never larger than unity, which is an extension of our GKI to
super-alphabets, which is again, a smooth extension that degenerates to the CKI
for $s=1$.\\

\begin{example}
\label{xmp1}
Consider a binary source sequence and a block code of length 2, which maps the source strings 00, 01,
10, and 11, into
0, 10, 110, and 111, respectively. This code can be implemented by a FS
encoder with $s=3$ states, labeled `S', `O', and `I', using the
following functions, $f$ and $g$ (see also Fig.\ \ref{fig1}):
\begin{eqnarray}
g(\mbox{S},0)&=&\mbox{O},\nonumber\\
g(\mbox{S},1)&=&\mbox{I},\nonumber\\
g(\mbox{O},0)&=&g(\mbox{O},1)=g(\mbox{I},0)=g(\mbox{I},1)=\mbox{S},\nonumber
\end{eqnarray}
and
\begin{eqnarray}
f(\mbox{S},0)&=&\mbox{null},\nonumber\\
f(\mbox{S},1)&=&11,\nonumber\\
f(\mbox{O},0)&=&0,\nonumber\\
f(\mbox{O},1)&=&10,\nonumber\\
f(\mbox{I},0)&=&0,\nonumber\\
f(\mbox{I},1)&=&1.\nonumber
\end{eqnarray}
State `S' designates the start of a block. State `O' remembers
that the first input of the block was `0' and state `I' remembers that
the first input was `1'. Upon moving to state `I', the encoder can already
output `11', because the entire codeword will be either `110' or `111' if the
first source symbol is `1', so the first two coded bits are `11' in either
case. After state `I', the encoder can complete the codeword according to the second
input in the block. After state `O',
outputs are generated only upon receiving the second symbol.
After both states `O' and `I', the encoder must
return to state `S' in order to start the next block.
The corresponding Kraft matrix (with row and column indexing in the order
of (S,O,I)) is given by:
\begin{equation}
K=\left(\begin{array}{ccc}
0 & 2^{-0} & 2^{-2}\\
2^{-1}+2^{-2} & 0 & 0\\
2^{-1}+2^{-1} & 0 & 0\end{array}\right)=
\left(\begin{array}{ccc}
0 & 1 & 0.25\\
0.75 & 0 & 0\\
1 & 0 & 0\end{array}\right)
\end{equation}
whose eigenvalues are $1$, $0$, and $-1$, and so the spectral radius is
$\rho(K)=1$.
As can be seen,
the sums of the second and third rows do not exceed unity, so when the
initial state is either `O' or `I', the Kraft sum does not exceed 1. 
On the other hand, the Kraft sum corresponding to the first row (pertaining to
`S') exceeds unity. This demonstrates an important observation: The model of a
general IL FS encoder is broader and more general than a model of a FS encoder
for which given every state, the encoder implements a
certain prefix (or UD) code for the variety of incoming symbols.
For $\ell=100$, we find that
$$K^{100}=\left(\begin{array}{ccc}
1 & 0 & 0\\
0 & 0.75 & 0.1875\\
0 & 1 & 0.25\end{array}\right)$$
with eigenvalues are 0, 1, and 1.
Here, the sums of the first and the second rows do not exceed unity, so when the
initial state is either `S' or `O', the Kraft sum does not exceed 1.
On the other hand, the Kraft sum corresponding to the third row exceeds unity,
and so, the above comment with regard to $K$ applies here too.
This concludes the Example 1.
\end{example}

\begin{figure}[h]
\hspace*{3cm}
\begin{picture}(0,0)%
\includegraphics{fsencoder.pstex}%
\end{picture}%
\setlength{\unitlength}{4144sp}%
\begingroup\makeatletter\ifx\SetFigFont\undefined%
\gdef\SetFigFont#1#2#3#4#5{%
  \reset@font\fontsize{#1}{#2pt}%
  \fontfamily{#3}\fontseries{#4}\fontshape{#5}%
  \selectfont}%
\fi\endgroup%
\begin{picture}(4853,2138)(187,-1526)
\end{picture}%
\caption{State transition diagram of the encoder in Example 1.
The various state transitions are labeled in a form $x/y$, where $x$ denotes
the input and $y=f(z,x)$ denotes the output.}
\label{fig1}
\end{figure}

Earlier we said that $\rho(K)\le 1$ is a necessary condition for a given code with
next-state function $g$ and code-lengths $\{L[f(z,x)]\}$ to be IL.
One might naturally wonder whether it is also a
sufficient condition.
This question is open in general, but we have two comments related to this
issue.

The first is that the answer is obviously affirmative for the subclass of IL encoders,
which satisfies the CKI for each and every state, i.e.,
$\sum_{z'\in\calZ}K_{zz'}=\sum_{x\in\calX} 2^{-L[f(z,x)]}\le 1$: Simply
construct a separate prefix code with length function $\{L[f(z,x)],~x\in\calX\}$
for each $z\in\calZ$. However, in general, an IL code does not necessarily
satisfy the ordinary Kraft inequality for each $z$. 
Indeed, in Example \ref{xmp1}, the sum of the first row of $K$ is larger than 1.

The second comment is that we can give an affirmative
answer in the level of longer blocks. Let $z_1\in\calZ$ be an arbitrary initial
state and consider the lengths, $l(z_1,x^n)=\sum_{i=1}^nL[f(z_i,x_i)]$. Then, as
we have seen in (\ref{lineargrowth}):
\begin{equation}
\sum_{x^n}2^{-L[f(z_1,x^n)]}\le s(1+nL_{\max}),
\end{equation}
where the factor of $s$ stems from taking the sum of $K_{zz'}$ over
$z'\in\calZ$.
Equivalently,
\begin{equation}
\sum_{x^n}2^{-[l[f(z_1,x^n)]+\log[s(1+nL_{\max})]}\le 1,
\end{equation}
and so, there exists a prefix code with lengths
$l'(x^n)=l(z_1,x^n)+\log[s(1+nL_{\max})]+\log s$, which are relatively only slightly longer
than those of the original code. Here, the additional $\log s$ term is a
header that notifies $z_1$. 

\section{Irreducible FS Encoders}
\label{irred}

IL FS encoders for which the next-state function $g$ allows transition from every
state to every state within a finite number of steps, are henceforth referred to as {\em irreducible FS
encoders}. Equivalently, defining the $s\times s$ adjacency matrix $A$ such
that $A_{zz'}=1$ whenever $\exists x\in\calX$ such that $g(z,x)=z'$ and 
$A_{zz'}=0$ otherwise, then an IL FS encoder is irreducible iff the matrix $A$
is irreducible. Likewise, an IL FS encoder is irreducible iff the matrix $K$
is irreducible. For an irreducible FS encoder, the shortest path from every
state $z$ to every state $z'$ lasts no longer than $s-1$ steps, because any longer path
must visit a certain state $z''$ at least twice, meaning that this path contains
a loop starting and ending at $z''$, which can be eliminated.
Clearly, the encoder of Example 1 is irreducible.

Intuitively, it makes sense to use irreducible encoders,
because for reducible ones, once the machine leaves a certain subset of
transient states, it can never return, and so, effectively, a reducible encoders uses
eventually a smaller number of states after finite time. Specifically, given a
reducible machine and an infinite individual sequence $x_1,x_2,\ldots$, 
suppose the machine starts at a transient state. Then, there
are two possibilities: either the machine quits the subset of transient
states after finite time, or it stays in that subset forever. In the former
case, the transient states are in use for finite time only and then never used
again. In the latter case, the recurrent states are never used. In either case,
asymptotically, only a subset of the available states are used, and so,
effectively, the number of states actually used is smaller than $s$.
Let $\calZ_\infty$
denote the set of states visited infinitely many times along the sequence. 
This set is necessarily closed and induces a strongly connected subgraph. 
Consequently, the asymptotic behavior of the encoder along the given sequence 
is governed entirely by its restriction to $\calZ_\infty$,
which constitutes an irreducible finite-state encoder with strictly fewer than 
$s$ states. Therefore, reducible encoders cannot offer asymptotic advantages over irreducible ones, even for individual sequences.

Assume next that the next-state function $g$ induces an irreducible matrix
$K^\ell$, where $\ell$ be an arbitrary positive integer. Since $K^\ell$ is
non-negative and irreducible,
the Collatz-Wielandt formulas \cite{Collatz42}, \cite{Wielandt50} for the spectral radius of $K^\ell$ hold true.
These are given by
\begin{equation}
\rho(K^\ell)=\max_{w\in\calW_+}\min_{\{z:~w_z>0\}}\frac{[K^\ell w]_z}{w_z}
=\min_{w\in\calW_+}\max_{\{z:~w_z>0\}}\frac{[K^\ell w]_z}{w_z}.
\end{equation}
where $w$ is an $s$-dimensional column vector and $\calW_+$ is the set of
all such vectors with 
non-negative components not all of which are zero.
These lead to the two following GKI's:
\begin{equation}
\forall~w\in\calW_+~\exists~z~\mbox{such that}~w_z>0~\mbox{and}~\sum_{z'\in{\cal S}}
w_{z'}\cdot\sum_{\{x^\ell:~g(z,x^\ell)=z'\}}2^{-L[f(z,x^\ell)]}\le w_z,
\end{equation}
and
\begin{equation}
\exists~w\in\calW_+~\forall~z~\mbox{such that}~w_z>0\mbox{:}~\sum_{z'\in{\cal S}}
w_{z'}\cdot\sum_{\{x^\ell:~g(z,x^\ell)=z'\}}2^{-L[f(z,x^\ell)]}\le w_z,
\end{equation}
The first formulation can be simplified at the price of a possible
loss of tightness, by selecting $w$ to be the all-one vector and thereby
bounding $\rho(K^\ell)$ from below. This results in
the conclusion that an IL FS encoder always satisfies yet another GKI:
\begin{equation}
\label{lessthanone}
\exists z\in{\cal S}~
\sum_{x^\ell\in{\cal X}^\ell}2^{-L[f(z,x^\ell)]}\le 1.
\end{equation}
In words, for every given irreducible FS encoder, $(f,g)$,
and for every natural $\ell$, there is at least one initial state,
$z\in{\cal Z}$, for which the Kraft sum is less than unity, but again, not all
states must satisfy this condition (as we saw in Example 1, the Kraft sum
exceeds unity when the initial state is `S').
All these are also smooth extensions of the
CKI in the sense that for $s=1$ we are back to the CKI.

But there is an even stronger GKI that applies to irreducible encoders.
It asserts that in the irreducible case, $K^n$ does not even grow
linearly as in (\ref{lineargrowth}), but is rather bounded by a constant, independent of $n$. For $s=1$,
this constant is 1, again in agreement with the CKI.

\begin{theorem}
\label{thm2}
Let $K$ be an irreducible Kraft matrix. Then, for all $z,z'\in\calZ$
and for every natural $n$,
\begin{equation}
(K^n)_{zz'}=\sum_{\{x^n:~g(z,x^n)=z'\}}2^{-L[f(z,x^n)]}\le 2^{(s-1)L_{\max}}.
\end{equation}
Consequently, for every $z\in\calS$,
\begin{equation}
\sum_{x^n\in\calX^n}2^{-L[f(z,x^n)]}\le s\cdot 2^{(s-1)L_{\max}},
\end{equation}
and
\begin{equation}
\sum_{z\in\calS}\sum_{x^n\in\calX^n}2^{-L[f(z,x^n)]}\le s^2\cdot 2^{(s-1)L_{\max}}.
\end{equation}
\end{theorem}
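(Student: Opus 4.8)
The plan is to combine Theorem~\ref{thm1} ($\rho(K)\le 1$) with the Perron--Frobenius theorem for irreducible nonnegative matrices and a short argument controlling the ratios of the entries of the Perron eigenvector. Crucially, no iteration over the block length $n$ will be needed --- which is exactly why the resulting bound turns out to be independent of $n$, in contrast with the linear-in-$\ell$ bound of~(\ref{lineargrowth}) in the general reducible case.

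First I would invoke Perron--Frobenius: since $K$ is nonnegative and irreducible, $\lambda\dfn\rho(K)$ is an eigenvalue admitting a right eigenvector $v$ with strictly positive components, $Kv=\lambda v$ (the irreducible refinement of Theorem 8.3.1 in \cite{HJ85}). By Theorem~\ref{thm1}, $\lambda\le 1$, hence $K^n v=\lambda^n v\le v$ component-wise for every natural $n$ (using also $v\ge 0$). Taking the $z$-th component of $K^n v\le v$ and retaining only the $z'$-th term of the sum $\sum_{z''}(K^n)_{zz''}v_{z''}$ --- legitimate because every $v_{z''}>0$ and $K^n\ge 0$ --- yields $(K^n)_{zz'}\,v_{z'}\le v_z$, i.e.\ $(K^n)_{zz'}\le v_z/v_{z'}$ for all $z,z'\in\calZ$ and all $n$. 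It thus remains to bound $v_z/v_{z'}$ by $2^{(s-1)L_{\max}}$ \emph{uniformly} over $z,z'$.

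For that I would exploit the eigenvector identity edge by edge. If $K_{ac}>0$, i.e.\ there is a transition $a\to c$, then $\lambda v_a=\sum_b K_{ab}v_b\ge K_{ac}v_c$, so $v_c\le\lambda v_a/K_{ac}\le v_a/K_{ac}$ since $\lambda\le 1$; moreover $K_{ac}$ contains at least one term $2^{-L[f(a,x)]}\ge 2^{-L_{\max}}$, hence $K_{ac}\ge 2^{-L_{\max}}$ and therefore $v_c\le 2^{L_{\max}}v_a$. Since $K$ is irreducible there is a simple --- hence length $m\le s-1$ --- path from $z'$ to $z$ in the transition graph; chaining $v_c\le 2^{L_{\max}}v_a$ along this path gives $v_z\le 2^{mL_{\max}}v_{z'}\le 2^{(s-1)L_{\max}}v_{z'}$, using $L_{\max}\ge 0$. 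Combining with the previous paragraph, $(K^n)_{zz'}\le 2^{(s-1)L_{\max}}$; the identification of $(K^n)_{zz'}$ with $\sum_{\{x^n:g(z,x^n)=z'\}}2^{-L[f(z,x^n)]}$ is already recorded in~(\ref{lineargrowth}). Summing this bound over $z'\in\calZ$ gives $\sum_{x^n}2^{-L[f(z,x^n)]}\le s\cdot 2^{(s-1)L_{\max}}$, and summing once more over $z\in\calZ$ gives $s^2\cdot 2^{(s-1)L_{\max}}$, completing the three claims.

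The bookkeeping --- Perron--Frobenius, passing from the eigenvector bound to the entrywise bound, and the two summations --- is routine once Theorem~\ref{thm1} is in hand. The only real content, and the step I expect to need the most care, is the uniform two-sided control of the Perron eigenvector's components, in particular pinning the exponent down to exactly $s-1$. The two facts that make it work are that $K_{ac}\ge 2^{-L_{\max}}$ wherever it is nonzero, and that irreducibility caps the required path length at $s-1$. For $s=1$ the bound becomes $2^0=1$, recovering the CKI, consistent with the rest of the paper.
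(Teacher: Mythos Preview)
Your proof is correct and follows essentially the same Perron--Frobenius strategy as the paper: bound $(K^n)_{zz'}$ by a ratio of components of a Perron eigenvector, then control that ratio using irreducibility (path length $\le s-1$) and the fact that every nonzero entry of $K$ is at least $2^{-L_{\max}}$.

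There are two mild differences worth noting. First, the paper works with the \emph{left} Perron eigenvector $u$ (so that $u^\top K^n=\lambda^n u^\top$ gives $(K^n)_{zz'}\le u_{z'}/u_z$), whereas you use the right one; the two are dual and the argument goes through identically. Second, the paper first treats the case $\rho(K)=1$ and then reduces $\rho(K)<1$ to it by rescaling $\hat K=K/\rho(K)$, while you handle both cases in one stroke via $K^n v=\lambda^n v\le v$ and the edge-by-edge inequality $v_c\le\lambda v_a/K_{ac}\le v_a/K_{ac}$. Your route is slightly more economical; the paper's has the minor advantage of making the bound $(K^n)_{zz'}\le[\rho(K)]^{n+s-1}2^{(s-1)L_{\max}}$ explicit when $\rho(K)<1$, which shows the entries in fact decay exponentially in that case.
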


\begin{proof}
It is sufficient to prove the first inequality, as the two other ones will follow
trivially by a summation over $z'\in\calZ$ and then also over $z\in\calZ$,
respectively. Since $K$ is non-negative and irreducible, the Perron-Frobenius theorem
applies. This theorem asserts that the spectral radius, $\rho(K)$, 
is positive and simple, with left and right
eigenvectors, $u$ and $v$, respectively, that have only strictly positive components.
In Theorem \ref{thm1} we have already proved that $\rho(K)\le 1$. Assume first that $\rho(K)=1$.
Then, $u^TK^n=u^T$,
or, equivalently,
\begin{equation}
\sum_{z\in\calS}u_z (K^n)_{zz'}=u_{z'}~~~~\forall~z'\in\calZ.
\end{equation}
Since all terms are non-negative, the left-hand side is lower bound by
$u_z(K^n)_{zz'}$ for any $z\in\calZ$. This implies for every $z,z'\in\calZ$
\begin{equation}
(K^n)_{zz'}\le\frac{u_{z'}}{u_z}\le\frac{\max_{z\in\calZ}u_z}{\min_{z\in\calZ}u_z}.
\end{equation}
Let $z_\star\in\calZ$ and $z^\star\in\calZ$ be achievers of $\min_{z\in\calZ}u_z$ and
$\max_{z\in\calZ}u_z$, respectively. Then, for every $z,z'\in\calZ$,
\begin{equation}
(K^n)_{zz'}\le\frac{u_{z^\star}}{u_{z_\star}}.
\end{equation}
Since $K$ is irreducible, the exists a path of length $\ell\le s-1$ from
$z^\star$ to $z_\star$, say, $z^\star\to z_1\to \cdot\cdot\cdot\to
z_{\ell-1}\to z_\star$ such that
\begin{equation}
(K^\ell)_{z^\star z_\star}\ge K_{z^\star z_1}\cdot K_{z_1z_2}\cdot\cdot\cdot K_{z_{\ell-1}z_\star}>0. 
\end{equation}
Since all positive
entries of $K$ are at least as large as $2^{-L_{\max}}$, this product is
at least as large as $2^{-\ell L_{\max}}\ge 2^{-(s-1)L_{\max}}$.
It follows then that
\begin{equation}
(K^\ell)_{z^\star z_\star}\ge 2^{-(s-1)L_{\max}}.
\end{equation}
Now,
\begin{equation}
u_{z_\star}=\sum_{z\in\calS}u_z(K^\ell)_{zz_\star}\ge
u_{z^\star}(K^\ell)_{z^\star z_\star}\ge
u_{z^\star}2^{-(s-1)L_{\max}},
\end{equation}
which implies that
\begin{equation}
2^{(s-1)L_{\max}}\ge \frac{u_{z^\star}}{u_{z_\star}}\ge (K^n)_{zz'},
\end{equation}
for every $z,z'\in\calS$. This
completes the proof for the case $\rho(K)=1$. The case $\rho(K)< 1$ is
obtained from the case $\rho(K)=1$ by simply defining $\hK=K/\rho(K)$ and using the fact that all
non-negative entries of $\hK$ are lower bounded by $2^{-L_{\max}}/\rho(K)$.
Since $\hK$ is also irreducible and since $\rho(\hK)=1$, we now have
\begin{equation}
(\hK^n)_{zz'}\le [\rho(K)2^{L_{\max}}]^{s-1}.
\end{equation}
But $\hK^n=K^n/[\rho(K)]^n$, and so,
\begin{equation}
(K^n)_{zz'}\le[\rho(K)]^{n+s-1}\cdot 2^{(s-1)L_{\max}}< 2^{(s-1)L_{\max}}.
\end{equation}
This completes the proof of Theorem \ref{thm2}.
\end{proof}

\section{Converse Bounds Derived from the GKI}
\label{converses}

In this section, we demonstrate how the GKI of Section \ref{irred} can be used
to obtain lower bounds on the performance of irreducible machines in
compression and in prediction problems. For compression, both probabilistic
sources and individual sequences are considered. For prediction, only the
individual sequence version is presented, but the probabilistic counterpart
can also be derived straightforwardly using the same ideas.

Let $\{P(z,x^\ell),~z\in\calZ,~x^\ell\in\calX^\ell\}$ be a joint probability
distribution of random variables $Z$ and $X^\ell$. Then,
\begin{eqnarray}
s^2\cdot
2^{(s-1)L_{\max}}&\ge&\sum_{z\in\calZ}\sum_{x^\ell\in\calX^\ell}2^{-L[f(z,x^\ell)]}\nonumber\\
&=&\sum_{z\in\calZ}\sum_{x^\ell\in\calX^\ell}P(z,x^\ell)\cdot
2^{-L[f(z,x^\ell)]-\log P(z,x^\ell)}\nonumber\\
&\ge&\exp_2\left\{-\sum_{z\in\calZ}\sum_{x^\ell\in\calX^\ell}P(z,x^\ell)
L[f(z,x^\ell)]+H(Z,X^\ell)\right\}\nonumber\\
&=&\exp_2\left[-\bE\{
L[f(Z,X^\ell)]\}+H(Z,X^\ell)\right],
\end{eqnarray}
where the inequality follows from Jensen's inequality and the convexity of the
exponential function.
By taking logarithms of both sides, rearranging terms, and normalizing
by $\ell$, we get
\begin{eqnarray}
R&=&\frac{\bE\{L[f(Z,X^\ell)]\}}{\ell}\nonumber\\
&\ge&\frac{H(Z,X^\ell)}{\ell}-\frac{\log_2\left(s^2\cdot
2^{(s-1)L_{\max}}\right)}{\ell}\nonumber\\
&\ge&\frac{H(X^\ell)}{\ell}-\frac{2\log_2s+
(s-1)L_{\max}}{\ell},
\end{eqnarray}
and if the source $P$ is stationary, $H(X^\ell)/\ell$ can be further lower
bounded by $H(X_\ell|X^{\ell-1})$, to obtain
\begin{equation}
R\ge H(X_\ell|X^{\ell-1})-
\frac{2\log_2s+
(s-1)L_{\max}}{\ell}.
\end{equation}
Since this bound applies to every positive integer $\ell$, we may maximize the
lower bound over $\ell$, and obtain
\begin{equation}
R\ge \sup_{\ell\ge 1}\left\{H(X_\ell|X^{\ell-1})-
\frac{2\log_2s+
(s-1)L_{\max}}{\ell}\right\}.
\end{equation}
We see that thanks to Theorem \ref{thm2}, the vanishing term subtracted from the
entropy decays at the rate of $1/\ell$ as opposed to the $(\log\ell)/\ell$
rate that stems from Lemma 2 of \cite{ZL78} as well as from the more general inequality of
$1+\ell\cdot L_{\max}$, that is obtained when reducible machines are allowed.

In the context of individual sequences, we can arrive at an analogous lower
bound, provided that we define a shift-invariant empirical distribution.
Specifically, let $x^n$ be a given individual
sequence, let $\ell$ be a positive integer smaller than $n$, and let $z_1$ be a given initial state of the
encoder. We assume that $x^n$ cyclic with respect to (w.r.t.) $g$ in the
sense that $g(z_n,x_n)=z_1$. If this is not the case, consider an
extension of $x^n$ by concatenating a suffix $x_{n+1}^{n+m}$ such that the
extended sequence would be cyclic w.r.t.\ $g$. Since $g$ is assumed irreducible, this
is always possible and the length $m$ of the extension need not be larger than
$s-1$. To avoid cumbersome notation, we redefine $x^n$ to be the sequence
after the cyclic extension (if needed), and we shall keep in mind that 
this cyclic extension adds no more than $m\cdot L_{\max}\le (s-1)L_{\max}$ bits
to the compressed description, or equivalently, $(s-1)L_{\max}/n$ to the
compression ratio, and so, this extra rate should be subtracted back upon
returning to the original sequence before the cyclic extension.
For every $w^\ell\in\calX^\ell$ and $z\in\calS$, let
\begin{equation}
\delta(z_i,x_i^{((i-1)\oplus(\ell-1))+1};z,w^\ell)=\left\{\begin{array}{ll}
1 & z_i=z~\mbox{and}~x_i^{((i-1)\oplus(\ell-1))+1}=w^\ell\\
0 & \mbox{elsewhere}\end{array}\right.
\end{equation}
where $\oplus$ denotes modulo-$n$ addition. Next,
define the empirical distribution
\begin{equation}
\hP(z,w^\ell)=\frac{1}{n}\sum_{i=1}^{n}\delta(z_i,x_i^{((i-1)\oplus(\ell-1))+1};z,w^\ell),
\end{equation}
Now,
\begin{eqnarray}
\frac{1}{n}\sum_{i=1}^{n}L[f(z_i,x_i)]&=&
\frac{1}{n\ell}\sum_{i=1}^{n}\ell\cdot L[f(z_i,x_i)]\nonumber\\
&=&\frac{1}{n\ell}\sum_{i=1}^{n}\sum_{j=0}^{\ell-1}L[f(z_i,x_{((i-1)\oplus j)+1})]\nonumber\\
&=&\frac{1}{n\ell}\sum_{i=1}^{n}L[f(z_i,x_i^{((i-1)\oplus (\ell-1))+1})]\nonumber\\
&=&\frac{1}{n\ell}\sum_{i=1}^{n}\sum_{z\in\calS}\sum_{w^\ell\in\calX^\ell}
\delta(z_i,x_i^{((i-1)\oplus(\ell-1))+1};z,w^\ell)
L[f(z,w^\ell)]\nonumber\\
&=&\frac{1}{n\ell}
\sum_{z\in\calS}\sum_{w^\ell\in\calX^\ell}
\sum_{i=1}^{n}
\delta(z_i,x_i^{((i-1)\oplus(\ell-1))+1};z,w^\ell)
L[f(z,w^\ell)]\nonumber\\
&=&\frac{1}{\ell}\sum_{z\in\calS}\sum_{w^\ell\in\calX^\ell}\hP(z,w^\ell)L[f(z,w^\ell)]\nonumber\\
&\ge&\hH(X_{\ell}|X^{\ell-1})-
\frac{2\log_2s+
(s-1)L_{\max}}{\ell},
\end{eqnarray}
where $\hH(X_\ell|X^{\ell-1})$ is the empirical conditional entropy derived
from the shift-invariant distribution $\hP$. Using the fact that this is true for every
natural $\ell < n$ and returning to the original sequence before the cyclic
extension, we find that
\begin{equation}
\frac{1}{n}\sum_{i=1}^{n}L[f(z_i,x_i)]\ge\max_{1\le \ell<
n}\left\{\hH(X_\ell|X^{\ell-1})-\frac{2\log
s+(s-1)L_{\max}}{\ell}\right\}-\frac{(s-1)L_{\max}}{n}.
\end{equation}
Furthermore, invoking Ziv's inequality (see eq.\ (13.125) in \cite{CT06}), this can be further lower bounded in
terms of the LZ complexity. Specifically, according to eq.\ (13.125) in
\cite{CT06},
for every Markov source, $Q_{\ell-1}$, of order $\ell-1$ and every
$x^n\in\calX^n$,
\begin{equation}
c(x^n)\log c(x^{n})\le -\log
Q_{\ell-1}(x^{n}|x_{-(\ell-2)}^{0})+\epsilon_\ell(n),
\end{equation}
where $c(x^{n})$ is the maximum number of distinct phrases whose concatenation
forms $x^n$, and
where $\epsilon_\ell(n)$ tends to zero at the rate of $O(\log(\log n)/\log n)$
for every fixed $\ell$. By minimizing the r.h.s.\ w.r.t.\ $Q_{\ell-1}$, we get
\begin{equation}
c(x^{n})\log c(x^{n})\le n\hH(X_\ell|X_0^{\ell-1})
+n\cdot\epsilon_\ell(n),
\end{equation}
and so,
\begin{equation}
\frac{1}{n}\sum_{i=1}^{n}L[f(z_i,x_i)]\ge
\frac{c(x^{n})\log c(x^{n})}{n}-\min_\ell\left[\epsilon_\ell(n)+\frac{2\log
s+(s-1)L_{\max}}{\ell}\right]-\frac{(s-1)L_{\max}}{n}.
\end{equation}
The minimizing $\ell$ can be found to be proportional to $\sqrt{n}$, but the
dominant term of $\epsilon_\ell(n)$ remains of the order 
of $\frac{\log(\log n)}{\log n}$.

We next derive a lower bound to the prediction error of any FS predictor that
is based on an irreducible FS machine.
Consider a finite-state (FS) predictor with $q$ states, defined by the following
recursion, for $i=1,2,\ldots$
\begin{eqnarray}
\label{recursion}
\hx_{i+1}&=&u(x_i,\sigma_i),\nonumber\\
\sigma_{i+1}&=&v(x_i,\sigma_i),
\end{eqnarray}
where $\bsigma=(\sigma_1,\sigma_2,\ldots)$, $\sigma_i\in\Sigma$, $i=1,2,\ldots$, is a
corresponding infinite state sequence, whose alphabet, $\Sigma$, is a finite set of
states of cardinality $q$, and $\hbx=(\hx_1,\hx_2,\ldots)$, $\hx_i\in\calX$,
$i=1,2,\ldots$, is the resulting
predictor output sequence. Without loss of generality, the initial state,
$\sigma_1$, and the initial prediction,
$\hx_1$, are assumed fixed members, $\sigma_\star\in\Sigma$ and $\hx_\star\in\calX$, respectively. Here,
$u:\calX\times\Sigma\to\calX$ is the predictor output function and
$v:\calX\times\Sigma\to\Sigma$ is the next-state function.

It is assumed that $\calX$ is a group with well-defined addition and
subtraction operations. For example, if $\calX=\{0,1,\ldots,\alpha-1\}$ then
it is natural to equip $\calX$ with addition and subtraction modulo
$\alpha$. Let $\rho:\calX\to\reals^+$ denote a given loss function. Then,
the performance of a predictor across the time range, $1\le t\le n$ is measured in terms 
of the time-average,
\begin{equation}
\frac{1}{n}\sum_{i=1}^{n}\rho(x_i-\hx_i).
\end{equation}
Given an arbitrary irreducible FS predictor $(u,v)$ as defined above, consider the
auxiliary conditional probability distribution,
\begin{equation}
Q_\theta(x_{i+1}|x_i,\sigma_i)=
\frac{e^{-\rho(x_{i+1}-u(x_i,\sigma_i))/\theta}}{Z(\theta)},~~\theta\ge 0,
\end{equation}
where 
\begin{equation}
Z(\theta)=\sum_{x\in\calX}e^{-\rho(x)/\theta}.
\end{equation}
Define also the function
\begin{equation}
\Delta(R)=\sup_{\theta\ge 0} \theta\cdot[R-\log Z(\theta)],~~~R\ge 0.
\end{equation}
Now, define 
\begin{equation}
Q_\theta(x^{n})=\prod_{i=1}^{n}Q_\theta(x_i|x_{i-1},\sigma_{i-1})
\end{equation}
where $\sigma_0$ and $x_0$ are arbitrary members of $\Sigma$ and $\calX$,
respectively, such that $\sigma_1=v(x_{0},\sigma_{0})=s_\star$, and
$\sigma_2,\sigma_3,\ldots,\sigma_{n-1}$
are generated from $x_1,x_2,\ldots,x_{n-1}$ as in (\ref{recursion}).

Let $k$ divide $n$ and consider the lossless compression of $x_0^{n-1}$ in blocks of
length $k$, $\bx_j=x_{jk+1}^{jk+k}$, $j=0,1,\ldots,n/k-1$, by using the
Shannon code, whose length function for a vector $x^k$ is $\lceil-\log
Q_\theta(x^k)\rceil$. 
This is equivalent to predictive coding, where the
prediction error signal, $z_n=x_n-f(x_{n-1},s_{n-1})$ is compressed losslessly
under a model of a memoryless source with a marginal $Q_\theta(z)$ (see Fig.\
\ref{fig2} for illustration).
In this case, since the ceiling operation is carried
over $k$-blocks, and there are $n/k$ such $k$-blocks, the upper bound to
$L(x^n)$ becomes
\begin{equation}
L(x^n)=\sum_{i=0}^{n/k-1}\lceil-\log Q_\theta(x_{ik+1}^{ik+k})\rceil\le
\frac{1}{\theta}\cdot\sum_{i=1}^{n}\rho(x_i-u(x_{i-1},\sigma_{i-1}))+n\log
Z(\theta)+\frac{n}{k}.
\end{equation}
On the other hand, the corresponding encoder of Fig.\ \ref{fig2} can be viewed
as an encoder with $q\cdot M_k$ states, where $M_k=(\alpha^k-1)/(\alpha-1)$,
since this is the number of combinations of
a state of the $q$-state predictor and a state of the lossless block encoder,
whose number of states is $\sum_{j=0}^{k-1}\alpha^j
=M_k$. Thus,
\begin{equation}
\frac{L(x^{n})}{n}\ge\hH(X^\ell|X^{\ell-1})-\frac{2\log(qM_k)+(qM_k-1)L_{\max}}{\ell}-\frac{L_{\max}}{n}
\end{equation}
where it should be kept in mind that $L_{\max}$ is expected to grow linearly
with $k$. Thus, by comparing the upper bound and the lower bound to $L(x^n)$,
we have
\begin{eqnarray}
& &\frac{1}{n\theta}\cdot\sum_{i=1}^{n}\rho(x_i-u(x_{i-1},\sigma_{i-1}))+\log
Z(\theta)+\frac{1}{k}\nonumber\\
&\ge&\hH(X^\ell|X^{\ell-1})-\frac{2\log(qM_k)+(qM_k-1)L_{\max}}{\ell}-\frac{L_{\max}}{n}.
\end{eqnarray}
or, equivalently,
\begin{eqnarray}
& &\frac{1}{n}\sum_{i=1}^{n}\rho(x_i-u(x_{i-1},\sigma_{i-1}))\nonumber\\
&\ge&\theta\left[\hH(X^\ell|X^{\ell-1})-\frac{2\log(qM_k)+(qM_k-1)L_{\max}}{\ell}-\frac{L_{\max}}{n}-\frac{1}{k}-\log
Z(\theta)\right].
\end{eqnarray}
Maximizing the r.h.s\ over $\theta\ge 0$, we get
\begin{equation}
\frac{1}{n}\sum_{i=1}^{n}\rho(x_{i+1}-u(x_i,\sigma_i))\ge\Delta\left(\hH(X_\ell|X^{\ell-1})-
\frac{2\log(qM_k)+(qM_k-1)L_{\max}}{\ell}-\frac{L_{\max}}{n}-\frac{1}{k}\right).
\end{equation}
The bound is meaningful if $k\gg 1$ and $\ell\gg qM_k$, so that the two
subtracted terms in the argument of the function $\Delta(\cdot)$ are small
compared to the main term, $\hH(X_\ell|X^{\ell-1})$.
It is tight essentially for sequences of the form
$x_i=u(x_{i-1},\sigma_{i-1})+z_i$, $i=1,2,\ldots$, where $z^n=(z_1,\ldots,z_n)$ is typical to
an i.i.d.\ source and
where the marginal empirical distribution of each $z_i$ is close to
$e^{-\rho(z)/\theta}/Z(\theta)$ for some $\theta\ge 0$.

\begin{figure}[h]
\hspace*{0cm}
\begin{picture}(0,0)%
\includegraphics{predcodec.pstex}%
\end{picture}%
\setlength{\unitlength}{4144sp}%
\begingroup\makeatletter\ifx\SetFigFont\undefined%
\gdef\SetFigFont#1#2#3#4#5{%
  \reset@font\fontsize{#1}{#2pt}%
  \fontfamily{#3}\fontseries{#4}\fontshape{#5}%
  \selectfont}%
\fi\endgroup%
\begin{picture}(7468,5718)(481,-5137)
\put(1074,348){\makebox(0,0)[lb]{\smash{{\SetFigFont{11}{13.2}{\rmdefault}{\mddefault}{\itdefault}{\color[rgb]{0,0,0}$x_n$}%
}}}}
\put(3549,-833){\makebox(0,0)[lb]{\smash{{\SetFigFont{11}{13.2}{\rmdefault}{\mddefault}{\itdefault}{\color[rgb]{0,0,0}$\hat{x}_n$}%
}}}}
\put(4951,-118){\makebox(0,0)[lb]{\smash{{\SetFigFont{10}{12.0}{\rmdefault}{\mddefault}{\itdefault}{\color[rgb]{0,0,0}$-$}%
}}}}
\put(4484,266){\makebox(0,0)[lb]{\smash{{\SetFigFont{10}{12.0}{\rmdefault}{\mddefault}{\itdefault}{\color[rgb]{0,0,0}$+$}%
}}}}
\put(1871,-476){\makebox(0,0)[lb]{\smash{{\SetFigFont{10}{12.0}{\rmdefault}{\mddefault}{\itdefault}{\color[rgb]{0,0,0}D}%
}}}}
\put(2751,-2044){\makebox(0,0)[lb]{\smash{{\SetFigFont{10}{12.0}{\rmdefault}{\mddefault}{\itdefault}{\color[rgb]{0,0,0}D}%
}}}}
\put(3989,-1329){\makebox(0,0)[lb]{\smash{{\SetFigFont{11}{13.2}{\rmdefault}{\mddefault}{\itdefault}{\color[rgb]{0,0,0}$\sigma_n$}%
}}}}
\put(2531,-1357){\makebox(0,0)[lb]{\smash{{\SetFigFont{10}{12.0}{\rmdefault}{\mddefault}{\updefault}{\color[rgb]{0,0,0}predictor}%
}}}}
\put(2724,-1136){\makebox(0,0)[lb]{\smash{{\SetFigFont{10}{12.0}{\rmdefault}{\mddefault}{\updefault}{\color[rgb]{0,0,0}FS}%
}}}}
\put(1898,-2870){\makebox(0,0)[lb]{\smash{{\SetFigFont{10}{12.0}{\rmdefault}{\mddefault}{\updefault}{\color[rgb]{0,0,0}lossless}%
}}}}
\put(1596,-3117){\makebox(0,0)[lb]{\smash{{\SetFigFont{10}{12.0}{\rmdefault}{\mddefault}{\updefault}{\color[rgb]{0,0,0}decompression}%
}}}}
\put(496,-2815){\makebox(0,0)[lb]{\smash{{\SetFigFont{10}{12.0}{\rmdefault}{\mddefault}{\updefault}{\color[rgb]{0,0,0}01000110...}%
}}}}
\put(5117,348){\makebox(0,0)[lb]{\smash{{\SetFigFont{11}{13.2}{\rmdefault}{\mddefault}{\itdefault}{\color[rgb]{0,0,0}$z_n$}%
}}}}
\put(6574,-2842){\makebox(0,0)[lb]{\smash{{\SetFigFont{11}{13.2}{\rmdefault}{\mddefault}{\itdefault}{\color[rgb]{0,0,0}$x_n$}%
}}}}
\put(2971,-2815){\makebox(0,0)[lb]{\smash{{\SetFigFont{11}{13.2}{\rmdefault}{\mddefault}{\itdefault}{\color[rgb]{0,0,0}$z_n$}%
}}}}
\put(5117,-3942){\makebox(0,0)[lb]{\smash{{\SetFigFont{10}{12.0}{\rmdefault}{\mddefault}{\updefault}{\color[rgb]{0,0,0}FS}%
}}}}
\put(4869,-4189){\makebox(0,0)[lb]{\smash{{\SetFigFont{10}{12.0}{\rmdefault}{\mddefault}{\updefault}{\color[rgb]{0,0,0}predictor}%
}}}}
\put(6602,-3914){\makebox(0,0)[lb]{\smash{{\SetFigFont{10}{12.0}{\rmdefault}{\mddefault}{\itdefault}{\color[rgb]{0,0,0}D}%
}}}}
\put(3769,-3806){\makebox(0,0)[lb]{\smash{{\SetFigFont{11}{13.2}{\rmdefault}{\mddefault}{\itdefault}{\color[rgb]{0,0,0}$\hat{x}_n$}%
}}}}
\put(5171,-4960){\makebox(0,0)[lb]{\smash{{\SetFigFont{10}{12.0}{\rmdefault}{\mddefault}{\itdefault}{\color[rgb]{0,0,0}D}%
}}}}
\put(6272,-4657){\makebox(0,0)[lb]{\smash{{\SetFigFont{11}{13.2}{\rmdefault}{\mddefault}{\itdefault}{\color[rgb]{0,0,0}$\sigma_{n-1}$}%
}}}}
\put(6041,230){\makebox(0,0)[lb]{\smash{{\SetFigFont{10}{12.0}{\rmdefault}{\mddefault}{\updefault}{\color[rgb]{0,0,0}lossless}%
}}}}
\put(5847,-61){\makebox(0,0)[lb]{\smash{{\SetFigFont{10}{12.0}{\rmdefault}{\mddefault}{\updefault}{\color[rgb]{0,0,0}compression}%
}}}}
\put(1368,-1707){\makebox(0,0)[lb]{\smash{{\SetFigFont{11}{13.2}{\rmdefault}{\mddefault}{\itdefault}{\color[rgb]{0,0,0}$\sigma_{n-1}$}%
}}}}
\put(1368,-981){\makebox(0,0)[lb]{\smash{{\SetFigFont{11}{13.2}{\rmdefault}{\mddefault}{\itdefault}{\color[rgb]{0,0,0}$x_{n-1}$}%
}}}}
\put(3813,-4516){\makebox(0,0)[lb]{\smash{{\SetFigFont{11}{13.2}{\rmdefault}{\mddefault}{\itdefault}{\color[rgb]{0,0,0}$\sigma_n$}%
}}}}
\put(5847,-3693){\makebox(0,0)[lb]{\smash{{\SetFigFont{11}{13.2}{\rmdefault}{\mddefault}{\itdefault}{\color[rgb]{0,0,0}$x_{n-1}$}%
}}}}
\put(3151,-3166){\makebox(0,0)[lb]{\smash{{\SetFigFont{10}{12.0}{\rmdefault}{\mddefault}{\itdefault}{\color[rgb]{0,0,0}$+$}%
}}}}
\put(3691,-3211){\makebox(0,0)[lb]{\smash{{\SetFigFont{10}{12.0}{\rmdefault}{\mddefault}{\itdefault}{\color[rgb]{0,0,0}$+$}%
}}}}
\put(7058,278){\makebox(0,0)[lb]{\smash{{\SetFigFont{10}{12.0}{\rmdefault}{\mddefault}{\updefault}{\color[rgb]{0,0,0}01000110...}%
}}}}
\end{picture}%
\caption{Auxiliary predictive encoder and decoder. The upper block diagram
depicts the encoder that losslessly compresses the prediction error signal,
$z_n$, which is
the difference between the input signal, $x_n$, and its prediction,
$\hat{x}_n$ obtained using a FS predictor. The lower block diagram stands for
the corresponding decoder.}
\label{fig2}
\end{figure}

\section{GKI in the Presence of Side Information}
\label{si}

We now discuss briefly an extension of the GKI 
for IL FS encoders in the case where SI is available at both the encoder and the decoder. 
The resulting condition is expressed in terms of the {\em joint spectral
radius} (JSR) of a finite set of nonnegative matrices indexed by the various side--information symbols. 
We identify verifiable sufficient conditions for subexponential growth of Kraft sums and discuss the 
limitations inherent in the presence of SI.

Let $\mathcal{X}$ be the source alphabet as before and let $\mathcal{W}$
denote the finite alphabet of the SI sequence, $w_1,w_2,\ldots$, whose symbols
are synchronized with the corresponding source symbols. As before, let 
$\mathcal{Z}$ be the finite set of states with $|\mathcal{Z}|=s$.
An FS encoder with SI is specified by an output function
$f:\mathcal{Z}\times\mathcal{X}\times\mathcal{W}\to\calY$,
($\calY$ being defined as a subset of $\{0,1\}^*$, similarly as before) and a next--state function
$g:\mathcal{Z}\times\mathcal{X}\times\mathcal{W}\to\mathcal{Z}$.
Given an initial state, $z_1=z$, a source sequence, $\bx=(x_1,x_2,\ldots)$, 
and a SI sequence, $\bw=(w_1,w_2,\ldots)$, the encoder implements the
equations:
\begin{eqnarray}
y_i&=&f(z_i,x_i,w_i),\nonumber\\
z_{i+1}&=&g(z_i,x_i,w_i),
\end{eqnarray}
for $i=1,2,\ldots$,
and the total code-length produced by the encoder after $n$ steps is
\begin{equation}
L[f(z,x^n,w^n)]=
\sum_{i=1}^n L\bigl(f(z_i,x_i,w_i)\bigr).
\end{equation}

\begin{definition}
An FS encoder is said to be 
{\em information--lossless with side information} if for every $n$, the
quadruple $(z_1,y^n,w^n,z_{n+1})\in\calZ\times\calY^n\times\calW^n\times\calZ$
dictates $x^n\in\calX^n$.
\end{definition}

For each SI symbol, $w\in\mathcal{W}$, define the corresponding Kraft matrix
\begin{equation}
[K(w)]_{zz'}=
\sum_{\{x\in\mathcal{X}:~g(z,x,w)=z'\}}
2^{-L[f(z,x,w)]},
\qquad z,z'\in\mathcal{Z}.
\end{equation}
Each $K(w)$ is a nonnegative $s\times s$ matrix.
For a given SI sequence, $w^n$, define the product matrix
\begin{equation}
K(w^n)=K(w_1)\cdot K(w_2)\cdots K(w_n).
\end{equation}
Now, let $\mathcal{K}=\{K(w),~w\in\calW\}$. The growth rate of the Kraft products,
$K(w^n)$, over arbitrary SI sequences, $\{w^n\}$, is governed by the
JSR of $\mathcal{K}$, which is defined as follows.

\begin{definition}
The JSR of $\mathcal{K}$ is defined as
\begin{equation}
\rho_{\mathrm{JSR}}(\mathcal{K})=
\lim_{n\to\infty}
\max_{w^n\in\mathcal{W}^n}
\|K(w^n)\|^{1/n},
\end{equation}
where $\|\cdot\|$ is any matrix norm.
\end{definition}

It is a classical result that this limit exists and is independent of the chosen norm.
The GKI in the presence of SI can be formulated as
follows.

\begin{theorem}
For an IL FS encoder with SI, 
\begin{equation}
\rho_{\mathrm{JSR}}(\mathcal{K}) \le 1.
\end{equation}
\end{theorem}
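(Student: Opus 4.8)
The plan is to mimic the proof of Theorem~\ref{thm1}, replacing powers of a single matrix by arbitrary finite products. First I would record the combinatorial meaning of the product matrix: expanding $K(w^n)=K(w_1)\cdot K(w_2)\cdots K(w_n)$ entrywise and using the definition of each $K(w_i)$, one obtains, for every SI sequence $w^n\in\calW^n$ and every $z,z'\in\calZ$,
\begin{equation}
[K(w^n)]_{zz'}=\sum_{\{x^n:~g(z,x^n,w^n)=z'\}}2^{-L[f(z,x^n,w^n)]},
\end{equation}
exactly as in eq.~(\ref{lineargrowth}), the only difference being that the driving SI symbols vary from step to step.

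Second, I would bound each entry by grouping the sum according to the output length $l\dfn L[f(z,x^n,w^n)]$, which ranges over $\{0,1,\ldots,nL_{\max}\}$ with $L_{\max}\dfn\max_{z,x,w}L[f(z,x,w)]$. Since the pair $(z,z')$ and the SI string $w^n$ are held fixed, the information-losslessness-with-SI property guarantees that $x^n$ is uniquely determined by $(z,f(z,x^n,w^n),w^n,z')$; hence at most $2^l$ input strings can yield an output of length exactly $l$. Therefore
\begin{equation}
[K(w^n)]_{zz'}\le\sum_{l=0}^{nL_{\max}}2^{-l}\cdot 2^l=1+nL_{\max},
\end{equation}
a bound uniform in $w^n$ and growing only linearly in $n$. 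Summing over $z'\in\calZ$ gives, for the maximum-absolute-row-sum norm $\|\cdot\|_\infty$, the estimate $\|K(w^n)\|_\infty\le s(1+nL_{\max})$ for every $w^n\in\calW^n$.

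Finally, I would take $n$-th roots: $\max_{w^n\in\calW^n}\|K(w^n)\|_\infty^{1/n}\le\bigl[s(1+nL_{\max})\bigr]^{1/n}$, and the right-hand side tends to $1$ as $n\to\infty$. Because the limit defining $\rho_{\mathrm{JSR}}(\calK)$ exists and does not depend on the choice of matrix norm, this yields $\rho_{\mathrm{JSR}}(\calK)\le 1$. (Equivalently, a contradiction argument as in Theorem~\ref{thm1} works: if $\rho_{\mathrm{JSR}}(\calK)>1$, there would exist products $K(w^n)$ whose norms grow exponentially in $n$, contradicting the polynomial bound just derived.) I do not anticipate a substantive obstacle; the one conceptual point worth flagging is that, unlike the single-matrix case, one cannot in general reduce to a statement about a single eigenvalue and its Perron eigenvector, since the matrices $\{K(w):w\in\calW\}$ need not share a common sub-invariant vector — this is precisely why the conclusion is phrased via the JSR, whose Fekete-type definition through $n$-th roots of norms of products is exactly what the linear-growth bound on $[K(w^n)]_{zz'}$ controls.
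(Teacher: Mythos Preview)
Your proposal is correct and follows essentially the same approach as the paper's own proof: identify the entries of $K(w^n)$ as Kraft sums over paths with fixed SI, invoke the IL-with-SI property to get the counting bound $[K(w^n)]_{zz'}\le 1+nL_{\max}$ uniformly in $w^n$, and conclude from subexponential growth that $\rho_{\mathrm{JSR}}(\calK)\le 1$. In fact your write-up is more explicit than the paper's, which sketches the same steps tersely; your closing remark on why a Perron-eigenvector argument does not straightforwardly transfer is also apt.
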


\begin{proof}
Fix an arbitrary SI sequence, $w^n\in\mathcal{W}^n$ and states $z,z'\in\mathcal{Z}$. The $(z,z')$ entry of $K(w^n)$ is given by
\begin{equation}
\sum_{\{x^n:\ g(z,x^n,w^n)=z'\}}
2^{-L[f(z,x^n,w^n)]}.
\end{equation}
Since the encoder is IL for the fixed sequence $w^n$, the mapping
between $x^n$ and $(z,x^n,w^n,z')$
is injective over all paths from $z$ to $z'$. Grouping sequences according to their total code-length 
(similarly as before) and using a standard counting argument yields a linear 
upper bound (in $n$) on each matrix entry of $K(w^n)$, uniformly over $w^n$. 
Exponential growth of $\|K(w^n)\|$ is therefore impossible, 
and the JSR must satisfy $\rho_{\mathrm{JSR}}(\mathcal{K})\le1$.
\end{proof}

The following proposition can sometimes help.

\begin{proposition}
\label{prop1}
If there exists a vector $v\in\mathbb{R}^s$ with strictly positive components such that
$K(w) v \le v$ for every $w\in\calW$,
then for every SI sequence $w^n$,
$K(w^n)v \le v$,
and hence the family $\{K(w^n)\}$ is uniformly bounded.
\end{proposition}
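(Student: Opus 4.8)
The plan is to prove Proposition \ref{prop1} by a straightforward induction on the length $n$ of the side-information sequence, exploiting the nonnegativity of the matrices $K(w)$ together with the hypothesized common sub-invariant vector $v$. The base case $n=1$ is exactly the hypothesis $K(w)v\le v$ for every $w\in\calW$. For the inductive step, suppose $K(w^{n-1})v\le v$ holds for every $w^{n-1}\in\calW^{n-1}$, where $K(w^{n-1})=K(w_1)\cdots K(w_{n-1})$. Given an arbitrary sequence $w^n=(w_1,\ldots,w_n)$, I would write $K(w^n)v = K(w_1)\cdots K(w_{n-1}) \bigl(K(w_n)v\bigr)$ and then apply $K(w_n)v\le v$ componentwise. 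The key observation is that the matrix $K(w_1)\cdots K(w_{n-1})$ is a product of nonnegative matrices, hence itself nonnegative, so it is monotone with respect to the componentwise order: if $\bu\le\bu'$ then $K(w^{n-1})\bu\le K(w^{n-1})\bu'$. Applying this monotonicity with $\bu=K(w_n)v$ and $\bu'=v$ gives $K(w^n)v\le K(w^{n-1})v\le v$, the last inequality being the induction hypothesis.

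Having established $K(w^n)v\le v$ for all $n$ and all $w^n$, the uniform boundedness claim follows because $v$ has strictly positive components. Concretely, let $v_{\min}=\min_{z}v_z>0$ and $v_{\max}=\max_z v_z$. Since all entries of $K(w^n)$ are nonnegative, for any fixed row index $z$ we have $\sum_{z'}[K(w^n)]_{zz'}\,v_{\min}\le\sum_{z'}[K(w^n)]_{zz'}\,v_{z'}=[K(w^n)v]_z\le v_z\le v_{\max}$, so that $\sum_{z'}[K(w^n)]_{zz'}\le v_{\max}/v_{\min}$ for every $z$, every $n$, and every $w^n$. In other words the entrywise sums of all Kraft products are bounded by the single constant $v_{\max}/v_{\min}$, independent of block length and side-information sequence; equivalently, $\|K(w^n)\|_\infty\le v_{\max}/v_{\min}$ for all $n$, which is the uniform bound. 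As a byproduct this immediately re-confirms $\rho_{\mathrm{JSR}}(\mathcal{K})\le 1$, since a uniformly bounded family cannot have norms growing exponentially.

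I do not anticipate a genuine obstacle here: the argument is essentially the standard fact that a common sub-invariant vector certifies a bound on products of nonnegative matrices, and every step is elementary. The only point requiring a modicum of care is making explicit that a product of nonnegative matrices is order-preserving on the nonnegative orthant (and more generally preserves the partial order $\bu\le\bu'$), which justifies pushing the inequality $K(w_n)v\le v$ through the remaining factors; this is where the nonnegativity of each $K(w)$ is used in an essential way. The strict positivity of $v$ is used only at the final step, to convert the vector inequality $K(w^n)v\le v$ into a genuine numerical bound on matrix norms. I would present the proof in the order above: induction to get $K(w^n)v\le v$, then the positivity argument to extract the uniform bound.
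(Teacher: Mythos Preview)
Your proposal is correct and follows exactly the same approach as the paper: induction on $n$ using the order-preserving property of nonnegative matrices, followed by the strict positivity of $v$ to deduce uniform boundedness. The paper's own proof is essentially a two-sentence sketch of precisely this argument, so your version simply fills in the details the paper leaves implicit.
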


\begin{proof}
The claim follows by induction on $n$. Since $v>0$, uniform boundedness 
of all products implies $\rho_{\mathrm{JSR}}(\mathcal{K})\le 1$.
\end{proof}

For example, if $v=\bone$ satisfies proposition \ref{prop1}, this means that the
Kraft sum is less than or equal to unity for every initial state and every SI
sequence. In such a case, one can simply design a separate prefix code for
every combination of initial state and SI sequence.

In contrast to the case without SI, bounding the spectral radius of each
individual Kraft matrix $K(w)$ is necessary but insufficient to control 
the growth rate of arbitrary products. In other words, even if $\rho[K(w)]\le 1$ for every
$w\in\calW$ individually, the JSR may exceed unity, and in
fact, may be arbitrarily large.
As an example, let $\epsilon$ be an arbitrarily small positive real and consider the matrices
$$A=\left(\begin{array}{cc}
\epsilon & \frac{1}{\epsilon}\\
0 & \epsilon\end{array}\right)$$
and $B=A^T$. While $\rho(A)=\rho(B)=\epsilon$, which is arbitrarily small, it turns out that $\rho(A\cdot
B)=\epsilon^2+\frac{1}{2\epsilon^2}+\sqrt{1+\frac{1}{4\epsilon^4}}\approx\frac{1}{\epsilon^2}$,
which is accordingly, arbitrarily large.
The JSR is therefore the correct quantity governing feasibility.

Exact computation of the JSR is undecidable in general, 
even for nonnegative rational matrices. Consequently, the above result should be interpreted as a 
structural constraint rather than a computational criterion. Nonetheless,
there is a plethora of upper and lower bounds to the JSR.
Also, as mentioned earlier, the existence of a common positive sub-invariant vector provides a meaningful 
and verifiable sufficient condition for subexponential growth.\\

\section{GKI for Lossy Compression}
\label{lossy}

For lossy compression, we adopt a simple encoder model, where each source
vector $x^\ell\in\calX^\ell$ is first mapped into a reproduction vector
$\hx^\ell=Q(x^\ell)\in\hat{\calX}^\ell$ within distortion $\ell D$ and then
$\hx^\ell$ is losslessly compressed by an IL FS
encoder with $s$ states exactly as before. The latter may work in the level of single letters or
in the level of $\ell$-blocks. Let us define
$\calB(\hx^\ell)=\{x^\ell\in\calX^\ell:~d(x^\ell,\hx^\ell)\le \ell D\}$
and let $B_\ell=\max_{\hbx}|\calB(\hbx)|$. Now,
\begin{eqnarray}
K_{zz'}&\dfn&\sum_{\{x^\ell:~g(z,Q(x^\ell))=z'\}} 2^{-L[f(z,Q(x^\ell))]}\nonumber\\
&=&\sum_{\{\hx^\ell:~g(z,\hx^\ell))=z'\}}\sum_{\{x^\ell:~Q(x^\ell)=\hx^\ell\}}
2^{-L[f(z,\hx^\ell)]}\nonumber\\
&\le&\sum_{\{\hx^\ell:~g(z,\hx^\ell))=z'\}}|\calB(\hx^\ell)|\cdot
2^{-L[f(z,\hx^\ell)]}\nonumber\\
&\le&B_\ell\cdot\sum_{\{\hx^\ell:~g(z,\hx^\ell))=z'\}}2^{-L[f(z,\hx^\ell)]}\nonumber\\
&\dfn&B_\ell\cdot \hat{K}_{zz'},
\end{eqnarray}
and so, $K\le B_\ell\cdot \hat{K}$ entry-wise.
Now, $\hat{K}$ has all the properties that we have proved for the lossless
case, it is just defined in the super-alphabet of $\ell$-blocks.
Since $\rho(\hat{K})\le 1$, we readily have:
\begin{equation}
\rho(K)=\rho(B_\ell\cdot\hat{K})=B_\ell\cdot\rho(\hat{K})\le B_\ell.
\end{equation}
For additive distortion measures, the quantity $B_\ell$ can be estimated using the method of types \cite{CK11}, or the
Chernoff bound, or saddle-point integration \cite{deBruijn81}, \cite{MW25}. It is is upper bounded by
$2^{\ell\Phi(D)}$, where
\begin{equation}
\Phi(D)=\max_{\{P_{X\hat{X}}:~d(X,\hX)\le D\}} H(X|\hX).
\end{equation}
Thus, the corresponding GKI reads
\begin{equation}
\rho(K)\le 2^{\ell\Phi(D)}.
\end{equation}


\clearpage


\begin{thebibliography}{AA}

\bibitem{Kraft49}
Kraft, L.~G.~{\em A device for quantizing, grouping, and coding amplitude
modulated pulses}, Master's thesis, Department of Electrical Engineering, MIT,
Cambridge, MA, U.S.A., 1949.

\bibitem{McMillan56}
McMillan, B.~``Two inequalities implied by unique decipherability,''
{\em IEEE Trans.~Inform.~Theory\/} {\bf 1956}, vol.\ IT-2, pp. 115--116.

\bibitem{Gallager68}
Gallager, R.~G.~{\em Information Theory and Reliable Communication}, John
Wiley \& Sons, New York, 1968.

\bibitem{CT06}
Cover,~T.~M.; Thomas,~J.~A.~{\it Elements of Information Theory}, John Wiley
\& Sons, Hoboken, NJ, U.S.A., 2006.

\bibitem{ZL78}
Ziv,~J.; Lempel,~A.~``Compression of individual sequences via 
variable-rate coding,''
{\em IEEE Trans.~Inform.~Theory\/} {\bf 1978},
vol.~IT--24, no.~5, pp.~530--536.

\bibitem{RS60}
Rota,~G.~C.; Strang,~G.~``A note on the joint spectral radius'' {\bf 1960},
{\em Proceedings of the Netherlands Academy}, 22: 379--381.

\bibitem{Campbell73}
Campbell,~L.~L.~``Kraft inequality for decoding with respect to 
a fidelity criterion,'' {\em IEEE Trans.~Inform.~Theory\/} {\bf 1973},
vol.\ IT--19, no.\ 1, pp.\ 68--73.

\bibitem{me95}
Merhav,~N.~``A comment on ``A rate of convergence result for a universal
$D$-semifaithful code'',''
{\em IEEE Trans.~Inform.~Theory\/} {\bf 1995},
vol.~41, no.~4, pp.~1200--1202.

\bibitem{me26}
Merhav,~N.~```Refinements and generalizations of the Shannon lower bound via
extensions of the Kraft inequality,''  {\em Entropy} {\bf 2026}, 28(1), 76;
{\tt https://doi.org/10.3390/e28010076}

\bibitem{HJ85}
Horn,~R.~A.;~Johnson,~C.~R.~{\em Matrix Analysis}, Cambridge University Press,
New York, NY, U.S.A., 1985.

\bibitem{Collatz42}
Collatz,~L.~``Einschlie$\beta$ungssatz f\"{u}r die charakteristischen Zahlen von
Matrizen'', {\em Mathematische Zeitschrift} {\bf 1942}, vol.\ 48, no.\ 1, pp.\ 221-226.

\bibitem{Wielandt50}
Wielandt,~H.~``Unzerlegbare, nicht negative Matrizen,'' {\em Mathematische
Zeitschrift} {\bf 1950}, vol.\ 52, no.\ 1, 642--648.

\bibitem{CK11}
Csisz\'ar, I.; K\"orner, J.~{\it Information Theory: Coding Theorems for
Discrete Memoryless Systems}, Second Edition, Cambridge University Press,
2011.

\bibitem{deBruijn81}
de Bruijn,~N.~G.~{\em Asymptotic Methods in Analysis}, Dover Publications, New
York, 2nd Edition, 1981.

\bibitem{MW25}
Merhav,~N.; Weinberger,~N.~``A toolbox for refined information-theoretic
analyses with applications,'' {\em Foundations and Trends in Communications
and Information Theory} {\bf 2025}, vol.\ 22, no.\ 1, pp.\ 1--184.

\end{thebibliography}
\end{document}